\theoremstyle{plain}
\newtheorem{thm}{Proposition}[]
\newtheorem{rem}{Remark}[]
\begin{document}

\title{On Keiding's Equation and its relation to differential equations about prevalence and incidence in chronic disease epidemiology}
\author{Ralph Brinks}
\date{}
\maketitle

\begin{abstract}
We study the relation between the age-specific prevalence, incidence and mortality
in an illness-death model consisting of the three states \emph{Healthy, Ill, Dead}. The dependency
on three different time scales (age, calendar time, disease duration) is considered. 
It is shown that Keiding's equation published in 1991
is a generalisation of the solution of Brunet and Struchiner's partial differential 
equation from 1999. In a special case, we propose a particularly simple estimate of the incidence 
from prevalence data.
\end{abstract}

\section{Background}
Keiding reviewed the relations between the incidence and prevalence of a chronic disease based on
an illness-death model \cite{Kei91}. The illness-death model consists of the three states 
\emph{Healthy, Ill and Dead} (Figure \ref{fig:CompModel}). The transition rates $i, m_0,$ and $m_1$
between the states may depend on the time scales calendar time ($t$), age ($a$), and the rate $m_1$
may additionally depend on the duration of the disease ($d$). As we are dealing with chronic 
diseases, there is no transition from the state \emph{Ill} to \emph{Healthy}.
Let $S(t, a)$ denote the number of persons aged $a, ~a \ge 0$ at time $t$ in the state \emph{Healthy}.
Similarly, $C(t, a, d)$ denotes the number of persons aged $a$ at $t$ who are diseased for $d, ~d\ge 0$
time units. The notation is chosen for historical reasons, $S$ and $C$ stand for susceptibles and cases, 
respectively. 

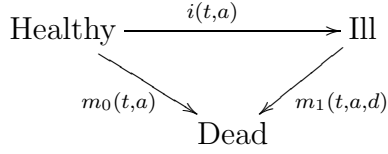
\begin{figure}[ht]
\centering
$$
\begin{xy}
\xymatrix{
\text{Healthy}\ar[rr]^{i(t,a)}\ar[rd]_{m_0(t,a)} &   & \text{Ill}\ar[dl]^{m_1(t,a,d)} \\
                                    & \text{Dead} &
}
\end{xy}
$$
\caption{Illness-death model. The transition rates $i$ and $m_0$ depend on calendar 
time $t$ and age $a.$ The rate $m_1$  additionally depends on the duration $d.$}
\label{fig:CompModel}
\end{figure}

In epidemiology, it is common to consider the age-specific prevalence
$$p(t, a) = \frac{C^\star(t, a)}{C^\star(t, a) + S(t, a)},$$
where
$C^\star (t, a) = \int_0^a C(t, a, \delta) \, \mathrm{d}\delta$ denotes the number of
diseased persons aged $a$ at time $t,$ irrespective of the duration $d.$ Keiding gave following
expression for the prevalence odds \cite[p. 379]{Kei91}:
\begin{equation}\label{e:KeidingOdds}
\frac{p(t, a)}{1 - p(t, a)} = \frac{\int_0^a \mathcal{M}_{t,a}(y) \, i(t-a+y, y) \, 
                                             e^{- \int_y^a m_1(t-a+\tau, \tau, \tau-y) \mathrm{d}\tau}
                                           \, \mathrm{d}y
                                   }{\mathcal{M}_{t,a}(a)}
\end{equation}

with

\begin{equation*}
\mathcal{M}_{t, a}(y) = \exp \left ( -\int_0^y m_0(t-a+\tau, \tau) + i(t-a+\tau, \tau) \mathrm{d}\tau \right ).
\end{equation*}

From Equation \eqref{e:KeidingOdds} the following Proposition can be deduced.

\begin{thm}
For the age-specific prevalence $p(t, a)$ it holds
\begin{equation}
p(t, a) = \frac{\int\limits_0^a i(t-\delta, a-\delta) \, \mathcal{M}_{t,a}(a-\delta) \, 
          e^{- M_1(t, a, \delta)} \, \mathrm{d}\delta}
          {\mathcal{M}_{t,a}(a) + \int\limits_0^a i(t-\delta, a-\delta) \, \mathcal{M}_{t,a}(a-\delta) \, 
          e^{- M_1(t, a, \delta)} \, \mathrm{d}\delta}, \label{e:p}
\end{equation}
where
$$M_1(t, a, d) := \int_{0}^{d} m_1(t - d + \tau, \,a - d + \tau , \,\tau ) \, 
\mathrm{d}\tau.$$
\end{thm}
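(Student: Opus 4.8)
The plan is to derive \eqref{e:p} from Keiding's prevalence-odds formula \eqref{e:KeidingOdds} by a purely algebraic step followed by a single change of variables (with a nested one inside the exponent). First I would record the elementary observation that whenever $p \in [0,1)$ satisfies $\tfrac{p}{1-p} = \tfrac{A}{B}$ with $B > 0$, one has $p = \tfrac{A}{A+B}$. Writing
$$A = \int_0^a \mathcal{M}_{t,a}(y)\, i(t-a+y, y)\, e^{-\int_y^a m_1(t-a+\tau, \tau, \tau-y)\,\mathrm{d}\tau}\,\mathrm{d}y, \qquad B = \mathcal{M}_{t,a}(a),$$
Equation \eqref{e:KeidingOdds} then immediately yields $p(t,a) = A / \bigl(A + \mathcal{M}_{t,a}(a)\bigr)$, which already has the shape of \eqref{e:p}. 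It therefore only remains to identify $A$ with the numerator appearing in \eqref{e:p}.

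To do this I would substitute $\delta = a - y$ in the outer integral defining $A$ (so $y = a-\delta$, $\mathrm{d}y = -\mathrm{d}\delta$, and the limits $y:0\to a$ become $\delta:a\to 0$). Term by term, $\mathcal{M}_{t,a}(y)$ becomes $\mathcal{M}_{t,a}(a-\delta)$ and $i(t-a+y,y)$ becomes $i(t-\delta, a-\delta)$, matching the corresponding factors in \eqref{e:p}. For the exponential factor I would additionally substitute, in the inner integral, $\sigma = \tau - (a-\delta)$; this maps the range $\tau : a-\delta \to a$ to $\sigma : 0 \to \delta$ and turns the integrand $m_1(t-a+\tau,\tau,\tau-y)$ with $y=a-\delta$ into $m_1(t-\delta+\sigma,\,a-\delta+\sigma,\,\sigma)$, whose integral over $\sigma\in[0,\delta]$ is exactly $M_1(t,a,\delta)$ by definition. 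Hence $A = \int_0^a i(t-\delta, a-\delta)\,\mathcal{M}_{t,a}(a-\delta)\,e^{-M_1(t,a,\delta)}\,\mathrm{d}\delta$, and inserting this into $p = A/(A+\mathcal{M}_{t,a}(a))$ gives \eqref{e:p}.

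There is no genuine analytic obstacle here; the argument is a bookkeeping exercise with two substitutions. The one place demanding care is the nested substitution in the exponent: the duration argument of $m_1$ is $\tau - y$, and after the outer substitution $y = a-\delta$ and the inner one $\sigma = \tau - (a-\delta)$ one must verify that the three arguments of $m_1$ line up with the shifted-time, shifted-age, duration pattern $\bigl(t-\delta+\sigma,\ a-\delta+\sigma,\ \sigma\bigr)$ used in the definition of $M_1$. One should also confirm that the single orientation-reversing substitution $\delta = a-y$ — which merely swaps the limits of the outer integral — introduces no spurious sign, the inner substitution being orientation-preserving. Both checks are immediate.
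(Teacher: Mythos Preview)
Your proposal is correct and follows exactly the approach the paper indicates: solve the odds relation \eqref{e:KeidingOdds} for $p$ via $p = A/(A+B)$, then re-parametrise the integration by $\delta = a - y$ (and $\sigma = \tau - (a-\delta)$ inside the exponent). The paper compresses this into a single sentence; your write-up merely spells out the two substitutions and the sign check in detail.
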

\begin{proof}
Solving Equation \eqref{e:KeidingOdds} for $p(t, a)$ and re-parametrising the path of integration 
yields Eq. \eqref{e:p}.
\end{proof}

Keiding has not presented a proof of Equation \eqref{e:KeidingOdds}. In this article,
we will give a proof and relate Equation \ref{e:p} to two partial differential equations (PDEs)
published a few years after Keiding's pivotal work in 1991.

\section{Partial differential equations}
In this section, we will formulate PDEs for $S(t, a)$ and 
$C(t, a, d)$ based on the model in Figure \ref{fig:CompModel}. The only assumptions are
\begin{itemize}
  \item All newborns are disease-free at time of birth (i.e., $C^\star(t, 0) = 0$ for all $t.$)
  \item There is no migration into or from the states \emph{Healthy} and \emph{Ill}.
  \item The rates $i, m_0$ and $m_1$ are smooth, i.e. partially differentiable 
  with continuous derivatives.
\end{itemize}

\bigskip

For the number $S$ of susceptibles we obtain following PDE:
\begin{eqnarray}
(\partial_t + \partial_a) \, S(t, a) & = & - \left ( m_0(t, a) + i(t, a) \right ) 
\, S(t, a) \label{e:PDE_S_ta} \\
S(t - a, 0) & = & S_0(t - a). \nonumber
\end{eqnarray}
Here $S_0(t - a) = S(t-a, 0)$ denotes the number of (healthy) newborns at time $t-a$. 
The notation $\partial_x$ means the partial derivative for $x, ~x \in \{t, a\}.$
Equation \eqref{e:PDE_S_ta} together with the initial condition $S_0(t - a) = S(t-a, 0)$ is
a Cauchy problem which has a unique solution (the rates $m_0$ and $i$ are  
smooth) \cite{Pol00}. This solution of the Cauchy problem is given in Eq. \eqref{e:S}.
\begin{equation}
S(t, a) = S_0(t - a) \, \exp \left ( - \int_0^a m_0(t-a+\tau, \tau) + 
i(t-a+\tau, \tau) \, \mathrm{d}\tau \right ) \label{e:S}.
\end{equation}

\bigskip

The calculation of the number $C$ of cases will be a bit more difficult, because at any 
time $t$ and at any age $a$ the current disease duration $d$ plays an important role. 
As there is no migration, the number $ C(t, a, d) $ is described by the following equations:
\begin{eqnarray*}
   C(t, a, d) &=& C(t-d, a-d, 0) \, \exp 
                        \left ( - \int_0^d m_1(t-d+\tau, a-d+\tau, \tau) \, \mathrm{d}\tau 
                        \right )\\
              &=& i(t-d, a-d) \,  S(t-d, a-d) \, 
                        e^{- \int\limits_{0}^{d} m_1(t-d+\tau, a-d+\tau, \tau) \,
                        \mathrm{d}\tau}.
\end{eqnarray*}

$C(t, a, d)$ is a solution of another Cauchy problem. The associated PDE is 
\begin{equation*}
  (\partial_t + \partial_a + \partial_d) \, C(t,a,d) = 
   - C(t, a, d) \, m_1(t, a, d), \label{e:PDE_tad}
\end{equation*}
and the initial condition is $C(t, a, 0) = i(t,a) \, S(t, a)$ for all $t, a.$
\begin{proof}
It holds
\begin{eqnarray*}
\partial_x C(t,a,d) 
  &=& \quad \partial_x i(t - d, \,a - d) \, S(t - d, \,a - d) \, \exp \left \{-M_1(t, a, d) \right \}\\
  & & + \, i(t - d, \,a - d) \, \partial_x S(t - d, \,a - d) \, \exp \left \{-M_1(t, a, d) \right \} \\ 
  & & - \, i(t - d, \,a - d)\, S(t - d, \,a - d) \, \, \exp \left \{-M_1(t, a, d) \right \} \times \\ 
  & & \qquad \qquad \qquad \qquad \qquad \qquad \qquad \qquad \qquad \partial_x M_1(t, a, d)
\end{eqnarray*}

and 

\begin{eqnarray*}
\partial_d C(t,a,d) 
   &=& - (\partial_t + \partial_a) i(t - d, \,a - d) \, S(t - d, \,a - d) \, \exp \left \{-M_1(t, a, d) \right \} \\
   & & - \, i(t - d, \,a - d) \, (\partial_t + \partial_a) S(t - d, \,a - d) \, \exp \left \{-M_1(t, a, d) \right \} \\ 
  & & - \, i(t - d, \,a - d)\, S(t - d, \,a - d) \, \, \exp \left \{-M_1(t, a, d) \right \} \times \\ 
  & & \qquad \qquad \qquad \qquad \qquad \qquad \qquad \qquad \qquad \quad \partial_d M_1(t, a, d).
\end{eqnarray*}

This implies
\begin{equation*}
(\partial_t + \partial_a + \partial_d) \, C(t,a,d) 
  = - C(t, a, d) (\partial_t + \partial_a + \partial_d) 
      \, M_1(t, a, d).
\end{equation*}

For $x \in \{t, a \}$ it is
\begin{equation*}
   \partial_x M_1(t, a, d) = \int_{0}^{d} \partial_x 
                m_1(t - d + \tau , \,a - d + \tau , \,\tau )\,\mathrm{d}\tau.
\end{equation*}

Furthermore, we find that
\begin{equation*}
   \partial_d M_1(t, a, d) = - \int_{0}^{d} (\partial_t + \partial_a) \, 
                              m_1(t - d + \tau , \,a - d + \tau , \,\tau ) \, 
                              \mathrm{d}\tau + m_1(t,a,d).
\end{equation*}
With the smoothness constraint, this proves that $C(t, a, d)$ is the unique solution of the Cauchy problem.
\end{proof}

We are interested in the overall number $C^\star(t, a):$ 
\begin{eqnarray}
C^\star(t, a) &=& \int_0^a C(t, a, \delta) \, \mathrm{d}\delta \nonumber \\
        &=& \int_0^a i(t-\delta, a-\delta) \,  S(t-\delta, a-\delta) \, 
            e^{- \int\limits_{0}^{\delta } m_1(t-\delta+\tau, a-\delta+\tau, \tau) 
            \mathrm{d}\tau} \mathrm{d}\delta \label{e:C}
\end{eqnarray}

By inserting \eqref{e:S} and \eqref{e:C} into the definition of $p(t, a),$ we obtain
Equation \eqref{e:p}. As described above, Equation \eqref{e:p} be transformed into 
Equation \eqref{e:KeidingOdds}, which proves Keiding's Equation.

\bigskip

The advantage of the Equations \eqref{e:p} and \eqref{e:KeidingOdds} 
is that for given incidence $i(t, a)$ 
and mortality rates $m_0(t, a)$ and $m_1(t, a, d),$ the age-specific prevalence can 
be calculated for all times $t$ and ages $a \ge 0.$ By this, we may estimate the impact
of health related interventions with an appropriate treatment of the involved time scales.
Unfortunately, the theory suggested by Keiding has rarely been used in epidemiology, public
health, or health economics. For instance, instead of treating time as a continuous
variable, discrete time steps are preferred, which may impose a considerable 
discretisation error (for an example of a discretisation error of more than 100\%,
see \cite{Bri14}). In the article \cite{Bri15}, the effect
of a health related intervention is estimated by treating time continuously. 

\bigskip

As a byproduct from Equation \eqref{e:p} we may conclude:
\begin{rem}
The prevalence $p(t, a)$ is independent from the number of newborns $S_0$.
\end{rem}

\bigskip

\section{Independence from the duration of the disease}
In case the mortality $m_1$ of the diseased persons is independent from the duration $d,$
the number of cases $C^\star(t, a)$ is a solution of the following PDE:
\begin{equation}
   (\partial_t + \partial_a) \, \gamma(t, a) = - m_1(t, a) \, \gamma(t, a) + i(t, a) 
                                          \, S(t, a). \label{e:PDE_C_ta}
\end{equation}
\begin{proof}
Together with the initial condition $\gamma(t-a, 0) = 0$
the PDE \eqref{e:PDE_C_ta} has the solution
\begin{eqnarray*}
\gamma(t, a) &=& e^{-\int\limits_0^a m_1(t-a+\alpha, \alpha) \, \mathrm{d}\alpha } \, \biggl \{ \gamma(t-a, 0) + \\
        &  &    ~~~~\int\limits_0^a i(t-a+\alpha, \alpha) \, S(t-a+\alpha, \alpha) \, 
              e^{\, \int\limits_0^\alpha m_1(t-a+\tau, \tau) \, \mathrm{d}\tau} 
              \mathrm{d}\alpha \biggr \}\\
        &=& \int\limits_0^a i(t-a+\alpha, \alpha) \, S(t-a+\alpha, \alpha) \, 
              e^{ -\int\limits_\alpha^a m_1(t-a+\tau, \tau) \, \mathrm{d} \tau} \mathrm{d}\alpha \\
        &=& \int\limits_0^a i(t-\delta, a-\delta) \, S(t-\delta, a-\delta) \, 
              e^{-\int\limits_{a-\delta}^a m_1(t-a+\tau, \tau) \, \mathrm{d}\tau} \mathrm{d}\delta\\
        &=& \int\limits_0^a i(t-\delta, a-\delta) \, S(t-\delta, a-\delta) \, 
              e^{-\int\limits_0^{\delta} m_1(t-\delta+\tau, a-\delta+\tau) \, \mathrm{d}\tau} \mathrm{d}\delta.
\end{eqnarray*}
By comparison with Eq. \eqref{e:C} we see that $C^\star(t, a)$ is the solution of the PDE. 
\end{proof}

If we insert \eqref{e:PDE_S_ta} and \eqref{e:PDE_C_ta} into the definition of $p(t,a),$
we may deduce following PDE \cite{Bri14}:
\begin{equation}\label{e:PDEBrinks}
(\partial_t + \partial_a) \, p = (1-p) \, \bigl ( i - p\, (m_1- m_0) \bigr).
\end{equation}

Similarly, we obtain following PDE 
for the prevalence odds $\pi(t, a) = \tfrac{p(t, a)}{1- p(t, a)}$ 
of Brunet and Struchiner \cite{Bru99}, which is equivalent to Eq. \eqref{e:PDEBrinks}:
\begin{equation}\label{e:PDEBrunet}
(\partial_t + \partial_a) \, \pi = i - \pi \, (m_1 - m_0 - i).
\end{equation}

In contrast to the PDE \eqref{e:PDEBrinks}, the PDE \eqref{e:PDEBrunet} has the
advantage of being linear. Thus, its solution is straightforward and
allows a handy simplification of Eq. \eqref{e:p} (see \cite[Eq. (1)]{Bri15a}).

\bigskip

We conclude this section with the observation that Keiding's Equation 
\eqref{e:KeidingOdds} is a generalisation of the solution of both PDEs 
\eqref{e:PDEBrinks} and \eqref{e:PDEBrunet}.

\section{Incidence being independent from calendar time}
An important application of the theory in epidemiology is the question
if incidence rate can be recovered from observed prevalence data. This question has 
already been mentioned in 1934 \cite{Mue34} and has been studied in \cite{Bri15b} 
with test data. Now it is shown that in the special case of incidence being
independent from calendar time $i(t, a) = i(a)$ the dependence of $m_1$
on the duration $d$ does not have to be known to estimate the incidence. 
This has the advantage that a possible duration dependency in $m_1$ may be
unknown.

\bigskip

Starting from \eqref{e:S} we find
\begin{eqnarray*}
	I(t, a) &:=& \int\limits_0^a i(t-a+\tau, \tau) \mathrm{d}\tau = \ln \frac{S_0(t-a)}{S(t, a)} - M_0(t, a)\\
	        &=& \ln S_0(t-a) - \ln S(t,a) - M_0(t,a)\\
	        &=& \ln S_0(t-a) - \ln \bigl (1-p(t,a) \bigr ) - \ln N(t, a) - M_0(t,a)\\
\end{eqnarray*}
with $$M_0(t, a) := \int\limits_0^a m_0(t-a+\tau, \tau) \mathrm{d}\tau.$$ The number $N(t, a)$ 
denotes the amount of persons aged $a$ at $t$ who are alive ($N = S + C.$) 
If $i$ is independent from $t$, it holds 
$$\partial_a I(t, a) = i(a) ~~\text{for all}~t.$$ 

Hence, we may deduce following representation of the age-specific incidence:
\begin{equation}\label{e:InzKZU}
	i(a) = \partial_a \biggl ( \ln S_0(t-a) - \ln \bigl (1-p(t,a) \bigr ) - \ln N(t, a) - M_0(t,a) \biggr ).
\end{equation}

This is an amazing result, because the occurring variables $S_0$ and $N$ are well known from 
demography. Assumed that the mortality $m_0$ can also be surveyed, the possibly complex 
$m_1(t, a, d)$ does not have to be known for an estimate of the incidence in case of
a given age-specific prevalence $p(t, a).$

\begin{rem}
Many epidemiological studies examine the mortality $m_1$ of the diseased instead of the
mortality $m_0$ of the non-diseased. Equation \eqref{e:InzKZU} suggests a \emph{paradoxic
study design}: Instead of following up on mortality of the diseased persons, the healthy 
persons are of primary interest.
\end{rem}

\section{Examples}

\subsection{General case}
In this subsection the age-specific prevalence $p(t, a)$ for a
hypothetical chronic disease is calculated using Equation \eqref{e:p}.
We assume that $t, a$ and $d$ are counted in units ``years'' with $t, a, d \ge 0.$
Mortality $m_0$ is assumed to be of Gompertz-Makeham type,
$$m_0(t, a) = \exp(-10.7 + 0.1 \, a) \, (1 - 0.002)^t,$$ 
and the incidence is given by
\begin{equation}\label{e:incExample}
i(t, a) = \frac{(a – 30)_+}{3000} \, (1 - 0.003)^t.
\end{equation}
The mortality $m_1$ of the diseased is assumed to be the product of $m_0$ and 
relative mortality $R(d) = (0.2 \, d - 1)^2 + 1$: 
$$m_1(t, a, d) =  R(d) \, m_0(t, a).$$ 

\bigskip

The integrals $\mathcal{M}$ and $M_1$ are calculated analytically, which
is possible here. The integral from $0$ to $a$ in the numerator and
denominator in \eqref{e:p} are calculated by Romberg's rule, which allows
an a-priori prescribed accuracy.

\bigskip

Figure \ref{fig:Results1} shows the resulting age-specific prevalences 
at $t=0, 50,$ and $100$ (in years). The three age profiles have a similar
qualitative behaviour: After onset of the disease for $a \ge 30$, the
prevalence increases sharply with age and until the seventh decade
of life. All three curves reach their maximum at the age of about 80 
(years) and then decrease slightly. 

\begin{figure}[ht]
  \centering
  \includegraphics[keepaspectratio,width=0.85\textwidth]{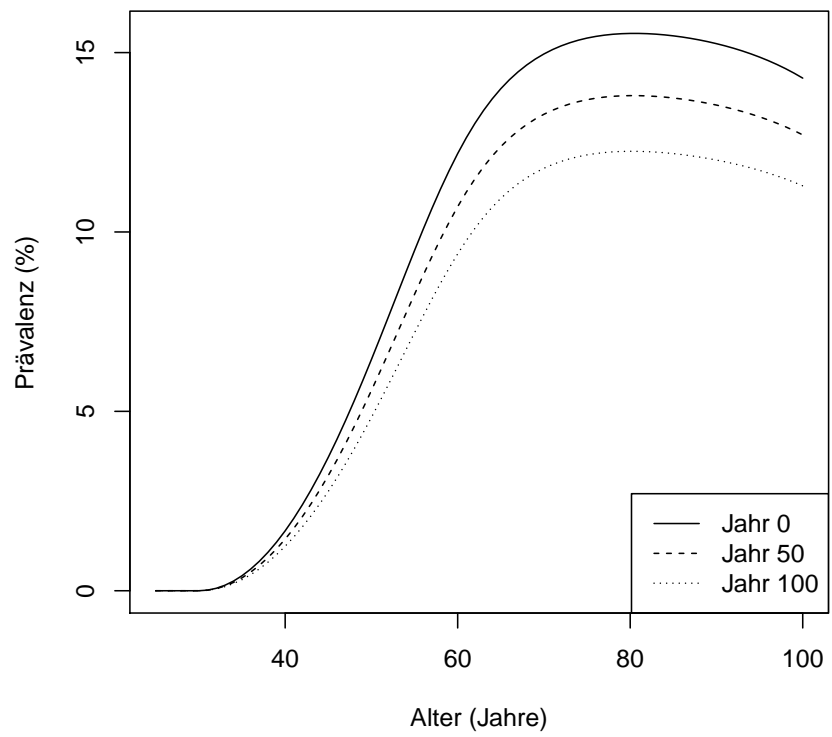}
\caption{Age-specific prevalences in the example in the years $t=0, 50,$ and $100$.} 
\label{fig:Results1}
\end{figure}

\clearpage

\subsection{Time-independent incidence}
If we leave out the term $(1-0.003)^t$ in Eq. \eqref{e:incExample}, 
we can estimate $i = i(a)$ from $p$ surveyed in year $t=100$ via Eq. \eqref{e:InzKZU}.
The partial derivative $\partial_a$ has been approximated by
a finite difference. Figure \ref{fig:Results2} show the results. 
Visually, there is a nearly perfect agreement between the theoretical and the estimated
incidence.

\begin{figure}[ht]
  \centering
  \includegraphics[keepaspectratio,width=0.85\textwidth]{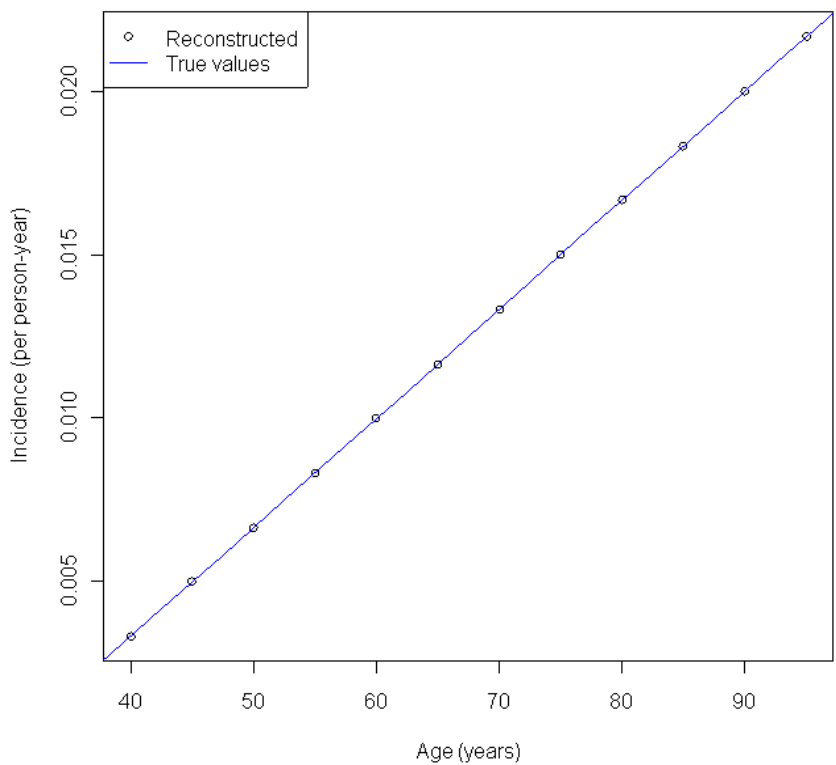}
\caption{Age-specific incidence in year $t = 100$. The solid line shows the theoretical
incidence rate $i(a) = \tfrac{(a – 30)_+}{3000}.$ The points represent the estimated values
using Equation \eqref{e:InzKZU}.} 
\label{fig:Results2}
\end{figure}

Additionally, we set up a population with a birth rate of 5000 persons per year in 
60 consecutive years (0, \dots, 59). Events in the illness-death model (diagnosis, 
death with or without the disease) are simulated by a discrete event simulation as 
described in \cite{Bri14a}. In the year $t=100$, we mimic a cross-section to
estimate the prevalence $p(100, a).$ As above, the incidence is estimated by
Eq. \eqref{e:InzKZU} and approximating $\partial_a$ by a finite difference. 
Figure \ref{fig:Results3} shows the results. In contrast to Figure \ref{fig:Results2}, 
the incidence cannot be estimated exactly with is due to the random error in the 
prevalence $p(100, a).$

\begin{figure}[ht]
  \centering
  \includegraphics[keepaspectratio,width=0.85\textwidth]{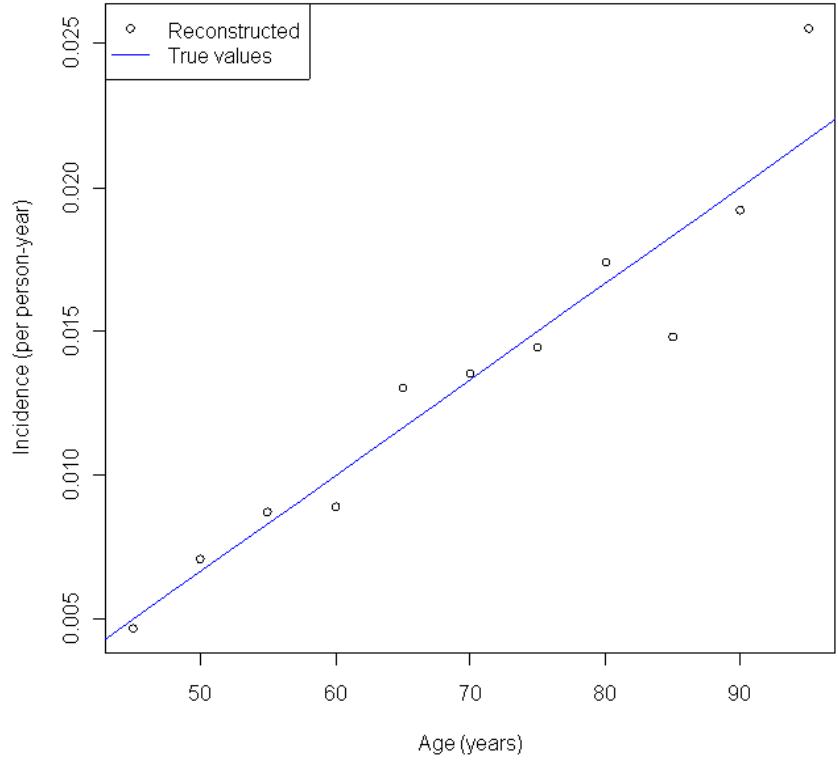}
\caption{Age-specific incidence in year $t = 100$. The solid line shows the theoretical
incidence rate $i(a) = \tfrac{(a – 30)_+}{3000}.$ The points represent the estimated values
using Equation \eqref{e:InzKZU} and the simulated prevalence.} 
\label{fig:Results3}
\end{figure}

\clearpage

\section{Conclusion}
This article combines the results of Keiding \cite{Kei91}, Brunet and Struchiner 
\cite{Bru99}, and Brinks and Landwehr \cite{Bri14,Bri15a}. We have found that
Keiding gave an analytical expression for the age-specific
prevalence in the most general case of the illness-death model, i.e. 
with involvement of all time scales (age, calendar time, and duration). 
Keiding presented this expression eight years before Brunet and Struchiner
published their linear partial differential equation without duration
dependency. Brinks and Landwehr extend the work by Keiding and Brunet and Struchiner 
by allowing migration and remission \cite{Bri14}. Even in the case with duration 
dependency, the age-specific prevalence can be related to the transition
rates in the illness-death model by a scalar partial differential equation. 
Details can be found in \cite{Bri15b}.

\bigskip

In addition, we have proposed a new way of estimating the incidence from 
a cross-sectional prevalence study where it is not necessary to 
survey the possibly complex duration dependency of $m_1.$ In this 
\emph{paradoxic study design}, the mortality of the healthy ($m_0$) needs to be known
instead of the mortality of the diseased ($m_1$). The proposed method was demonstrated by 
an example of a hypothetical chronic disease.


{}

\end{document}